%%%%%%%%%%%%%%%%%%%%%%% file template.tex %%%%%%%%%%%%%%%%%%%%%%%%%
%
% This is a general template file for the LaTeX package SVJour3
% for Springer journals.          Springer Heidelberg 2010/09/16
%
% Copy it to a new file with a new name and use it as the basis
% for your article. Delete % signs as needed.
%
% This template includes a few options for different layouts and
% content for various journals. Please consult a previous issue of
% your journal as needed.
%
%%%%%%%%%%%%%%%%%%%%%%%%%%%%%%%%%%%%%%%%%%%%%%%%%%%%%%%%%%%%%%%%%%%
%
% First comes an example EPS file -- just ignore it and
% proceed on the \documentclass line
% your LaTeX will extract the file if required
% [arxiv_v2: filecontents example.eps stripped, 188 chars]
\RequirePackage{fix-cm}
\documentclass[twocolumn]{svjour3}          % twocolumn
\smartqed  % flush right qed marks, e.g. at end of proof
\usepackage{graphicx}
%
% \usepackage{mathptmx}      % use Times fonts if available on your TeX system
%
% insert here the call for the packages your document requires
\usepackage{amsmath,amsfonts}
% etc.
%
% please place your own definitions here and don't use \def but
% \newcommand{}{}
\newcommand{\be}{\begin{equation}}
\newcommand{\ee}{\end{equation}}
%\theoremstyle{plain}
% \newtheorem{prop}{Proposition}
% \newtheorem{cor}[prop]{Corollary}
%
% Insert the name of "your journal" with
\journalname{Journal of Superconductivity and Novel Magnetism}
\begin{document}

\title{Fundamental building blocks of strongly correlated wave functions
\thanks{Supported by the
Croatian Ministry of Science grant 119-1191458-0512 and by the University of
Zagreb grant 202759.}
}
%\subtitle{Do you have a subtitle?\\ If so, write it here}

% if too long for running head
\titlerunning{Strongly correlated wave functions}

\author{D. K. Sunko
}

%\authorrunning{Short form of author list} % if too long for running head

\institute{D. K. Sunko \at
              Department of Physics, Faculty of Science, University of Zagreb \\
              \email{dks@phy.hr}           %  \\
}

\date{Received: date / Accepted: date}
% The correct dates will be entered by the editor

\maketitle

\begin{abstract}

The calculation of realistic N-body wave functions for identical fermions is
still an open problem in physics, chemistry, and materials science, even for N
as small as two. A recently discovered fundamental algebraic structure of
many-body Hilbert space allows an arbitrary many-fermion wave function to be
written in terms of a finite number of antisymmetric functions called shapes.
Shapes naturally generalize the single-Slater-determinant form for the ground
state to more than one dimension. Their number is exactly $N!^{d-1}$ in $d$
dimensions. An efficient algorithm is described to generate all fermion shapes
in spaces of odd dimension, which improves on a recently published general
algorithm. The results are placed in the context of contemporary
investigations of strongly correlated electrons.

\keywords{Strong correlations \and Many-body wave functions \and Invariant
theory}

% \PACS{PACS code1 \and PACS code2 \and more}
% \subclass{MSC code1 \and MSC code2 \and more}
\end{abstract}

\section{Introduction}
\label{intro}

The study of strong correlations has emerged as the focal point of both
fundamental and applied research in physics, chemistry, and materials science.
The reason is that modern functional materials fall in between the standard
textbook limits of ionic and metallic (or covalent) bonding. In particular the
two currently most interesting classes of materials, the high-temperature
superconducting cuprates and pnictides, both exhibit a fascinating mixture of
ionicity and metallicity~\cite{Lazic15,Borisenko16} which remains to be
unravelled. New tools and approaches are constantly being
sought~\cite{Hyowon14} for the description of electrons which inhabit active
(open) orbitals in these materials, for which the paradigm ``strongly
correlated electrons'' has been coined long ago.

In cuprates at least, the experimental evidence points to a separation of
roles between the electrons occupying copper and oxygen orbitals, such that,
roughly speaking, the coppers are responsible for the local, and the oxygens
for the extended degrees of freedom~\cite{Niksic13}. Because of strong Cu--O
hybridization, this separation is partly a dynamical
phenomenon~\cite{OSBarisic12}, and partly produces real-space
disorder~\cite{Bianconi87a,Tahir-Kheli11,Campi14}. It leads to a picture of
network~\cite{GBianconi12} or percolation~\cite{Tahir-Kheli13} conductivity,
in which it may be possible to reconcile the local strongly correlated
behavior with Fermi-liquid transport properties~\cite{Mirzaei12}. In
particular, if the hole concentration is $1+x$, the transport properties in
the superconducting range of dopings scale with $x$, indicating that the
``$1$'' hole remains localized~\cite{Mirzaei12,Chan14a}.

Remarkably, the outlines of a similar situation can be discerned in the case
of hydrogen disulphide. It becomes superconducting at high
temperature~\cite{Drozdov15} only after undergoing a structural phase
transition~\cite{Einaga16} at $\sim 150$~GPa, which necessarily involves the
active sulphur orbitals. Similarly, a rearrangement of orbital content is
inferred for the superconducting wave function~\cite{Bussmann-Holder16}.

The principal issue in strong correlations is the need to satisfy some
dynamical restriction (e.g.\ no double occupation of a $d$ orbital)
simultaneously with the Pauli principle. The problem is that the Pauli
principle is kinematically so restrictive that little configuration space
remains for the dynamically induced correlations, so one is at a loss to
understand how the system manages to satisfy both. Indeed the weak-coupling
paradigm is so ubiquitous precisely because the system usually does \emph{not}
manage both, instead it looks almost as the non-interacting one even in the
presence of strong interactions: this is the well-known Fermi liquid.

Recently, a new description of fermion many-body states has
emerged~\cite{Sunko16-1} which promises to shed some light on the above issues
from a fundamental point of view. It turns out that every system of $N$
identical (i.e.\ spinless or spin-polarized) fermions in $d$ dimensions has a
number of special states called \emph{shapes}, which are distinguished by a
certain type of irreducibility, such that they cannot be interpreted as
consisting of lower-energy states, even when their energy is high. Although
their number is absolutely very large ($N!^{d-1}$), it is vanishingly small
compared to all possible states spanning the same energy range. The shapes
form a kind of backbone of $N$-body Hilbert space, such that every state can
be described as some superposition of \emph{bosonic} excitations of the
shapes. In other words, the shapes are the only genuinely antisymmetric
states, while all the other (infinitely many) $N$-fermion states are shapes
masked by bosons. Shapes seem to be a natural way to describe the strongly
correlated wave functions, because they are formal alternatives to the
single-Slater-determinant ground state of the weak-coupling limit. In order to
study them, one has to have a way to generate them. A new algorithm for that
purpose is described in the present article. In addition to being much more
efficient than the previously published~\cite{Sunko16-1} one, it offers some
structural insight into shapes in odd dimensions. Here it is described in
detail for the particular case of three particles in three dimensions. An
introductory review of the shape formalism can be found
elsewhere~\cite{Sunko16-2}.

\section{Efficient algorithm for fermion shapes in odd dimensions}
\label{alg}

\subsection{Previous results~\cite{Sunko16-1}}
\label{prevrev}

Consider spinless (or maximum-spin) states only. Then any antisymmetric wave
function of $N$ fermions in $d$ dimensions may be written
\be
\Psi=\sum_{i=1}^D \Phi_i(\mathbf{r}_1,\ldots,\mathbf{r}_N)
\Psi_i(\mathbf{r}_1,\ldots,\mathbf{r}_N),
\label{basic}
\ee
where $D=N!^{d-1}$, $\Psi_i$ are antisymmetric with respect to the interchange
of any two vector coordinates $\mathbf{r}_i$, while $\Phi_i$ are
symmetric in each Cartesian coordinate component of the
$\mathbf{r}_i$ \emph{separately}. The $\Psi_i$ are called \emph{shapes}.

The crucial step enabling this formulation is the classification of wave
functions by the number of single-particle nodes, which is called their grade.
Because nodes always count the degrees of freedom of the system, and the
energy is linear in the nodes for the harmonic oscillator, the sum over states
for fermions in an oscillator well becomes completely general, as soon as one
reinterprets the energy as the grade. In order to emphasize this
reinterpretation, the usual $e^{-\beta\hbar\omega}$ is denoted $q$.
Specifically, the sum over states, organized by grade, for $N$ identical
particles in $d$ dimensions reads
\begin{equation}
Z_d(N,q)=Z_E(N,q)^dP_d(N,q),\quad Z_E=\prod_{k=1}^N\frac{1}{1-q^k},
\label{zdnq}
\end{equation}
where $P_d(N,q)$ is a polynomial in $q$, called a \emph{shape polynomial},
which is the generating function of shapes by grade. It satisfies
\emph{Svrtan's recursion}
\begin{equation}
NP_d(N,q)=\sum_{k=1}^{N}(\pm 1)^{k+1}\left[C^{N}_{k}(q)\right]^dP_d(N-k,q),
\label{recurPd}
\end{equation}
with the upper sign for bosons, and the lower for fermions. Here
\be
C^N_k(q)=\frac{(1-q^N)\cdots (1-q^{N-k+1})}{(1-q^k)}
\ee
is a polynomial, and $P_d(0,q)=P_d(1,q)=1$.

One can show that the shape polynomial is symmetric in even space dimensions,
while in odd dimensions the coefficient lists in the shape polynomials for
fermions and for bosons are ``mirror images'' of each other, e.g. for $N=3$
particles in $d=3$ dimensions, they are respectively
\begin{gather}
P_3(3,q)=q^9 + 3q^7 + 7q^6 + 6q^5 + 6q^4 + 10q^3 + 3q^2,\nonumber\\
B_3(3,q)=1 + 3q^2 + 7q^3 + 6q^4 + 6q^5 + 10q^6 + 3q^7.
\label{polN3d3}
\end{gather}
This property will be called ``mirroring.''

In this approach, single-particle wave functions are represented as formal
powers, such that the exponent denotes the grade. The formal-power
representation can easily be mapped onto any concrete realization, e.g.\ for
the harmonic oscillator,
\begin{equation}
t_i^lu_j^mv_k^n\to H_l(x_i)H_m(y_j)H_n(x_k)
e^{-(x_i^2+y_j^2+z_k^2)/2}.
\end{equation}
The formal-power representation encodes the essential behavior of nodes under
multiplication and addition of functions. If two functions are multiplied, the
number of nodes is added. If the functions are added, the number of nodes is
at most the same as that of the function with the larger number of nodes. This
encoding unleashes the formidable power of classical invariant
theory~\cite{Derksen02} for the classification of many-fermion wave functions.

In Ref.~\cite{Sunko16-1} an algorithm was described to obtain all shapes for
arbitrary $N$ and $d$. Unfortunately it is quite inefficient, making it
difficult to obtain all the shapes in three dimensions already for $N=5$, even
on a very large computer. A much more efficient algorithm is described below.

\subsection{Degree of the shape polynomial}
\begin{proposition}\label{degodd}
The degree of the shape polynomial for fermions in odd dimensions and for
bosons in even dimensions is
\be
\deg P_d(N,q)=\frac{dN(N-1)}{2}\equiv D(d,N).
\label{hyp}
\ee
\end{proposition}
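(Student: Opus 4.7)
The plan is to induct on $N$, using Svrtan's recursion \eqref{recurPd} with the strengthened hypothesis that $P_d(M,q)$ has degree $D(d,M)$ \emph{and} a strictly positive leading coefficient for every $M<N$. The base case $P_d(1,q)=1$ matches $D(d,1)=0$, and the coefficients of every $P_d(M,q)$ are non-negative integers (they count shapes by grade), so positivity of the leading coefficient is automatic once the correct degree is in place.

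First I would do the degree accounting. Direct inspection of $C^N_k(q)=(1-q^N)\cdots(1-q^{N-k+1})/(1-q^k)$ gives $\deg C^N_k(q)=k(2N-k-1)/2$, with leading coefficient $(-1)^{k-1}$ (keep the $-q^i$ from each numerator factor and divide by $-q^k$). Combined with the inductive hypothesis on $\deg P_d(N-k,q)$, the one-line identity $k(2N-k-1)+(N-k)(N-k-1)=N(N-1)$ shows that every term $[C^N_k(q)]^d P_d(N-k,q)$ on the right-hand side of \eqref{recurPd} has exactly degree $D(d,N)$, independent of $k$. So $\deg P_d(N,q)=D(d,N)$ provided the leading coefficients do not conspire to cancel under the recursion signs.

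The remaining step is a sign check. The leading coefficient of the $k$-th term is $(\pm 1)^{k+1}(-1)^{d(k-1)}c_{N-k}$ with $c_{N-k}>0$ by induction; the two sign factors combine to $(-1)^{(k-1)(d+1)}$ for fermions and $(-1)^{d(k-1)}$ for bosons. When $d$ is odd and we treat fermions, $(d+1)$ is even, so every $k$ contributes $+c_{N-k}$; when $d$ is even and we treat bosons, $d(k-1)$ is even, so again every $k$ contributes $+c_{N-k}$. In both cases the leading coefficient of the right-hand side is a sum of non-negative integers in which $k=N$ contributes $c_0=1$, hence it is strictly positive, and division by $N$ delivers the positive leading coefficient of $P_d(N,q)$ at degree $D(d,N)$, closing the induction. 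The hard part is really this parity bookkeeping; the two opposite-parity combinations (fermions in even $d$, bosons in odd $d$) are precisely the ones where the $k$-sum becomes alternating and the leading coefficients genuinely cancel, in harmony with the mirroring property illustrated in \eqref{polN3d3}.
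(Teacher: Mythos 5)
Your proof is correct, and its skeleton is the same as the paper's: strong induction on $N$ through Svrtan's recursion \eqref{recurPd}, with $\deg C^N_k(q)=k(2N-k-1)/2$ and the identity $k(2N-k-1)+(N-k)(N-k-1)=N(N-1)$ showing that every term on the right-hand side has the same degree $D(d,N)$. The difference lies in where the parity hypothesis does its work. The paper writes $\deg P_d(N,q)=\max_k\{D(d,N-k)+\deg C^N_k(q)^d\}$ and localizes the fermion/boson-versus-parity check entirely in the basis $N=2$, where the coefficient of $q^d$ in $[(1+q)^d\pm(1-q)^d]/2$ equals $[1\pm(-1)^d]/2=1$ precisely for the two stated cases. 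As written, that induction step does not address whether the equal-degree leading terms could cancel in the sum for $N\ge 3$ --- which is exactly what happens for the opposite parity combinations, cf.\ Proposition~\ref{degeven}, so some such check is genuinely needed. Your strengthened hypothesis (strictly positive leading coefficient), the leading coefficient $(-1)^{k-1}$ of $C^N_k(q)$, and the observation that $(\pm 1)^{k+1}(-1)^{d(k-1)}=+1$ for fermions in odd $d$ and bosons in even $d$ supply that missing no-cancellation argument uniformly in $N$, with the $k=N$ term contributing $c_0=1$ to guarantee strict positivity; it also absorbs the $N=2$ basis into the general step. So your route is not structurally different, but it is the more airtight version: it makes explicit that the parity restriction is what prevents the $k$-sum from becoming alternating at every order, rather than something verified only once at $N=2$.
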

\paragraph{Note} The formula is also correct when $N=0$ or $1$, for which
there is no difference between fermions and bosons.
\begin{proof}
Using~\eqref{hyp} as an induction hypothesis, it follows from the
recursion~\eqref{recurPd} that
\be
\deg P_d(N,q)=\max_k\left\{D(d,N-k)+\deg C_k^N(q)^d
\right\}.
\ee
Given that
\be
\deg C_k^N(q)=\frac{k(2N-k-1)}{2},
\ee
one finds that each term in~\eqref{recurPd} has the same degree,
\be
\deg P_d(N,q)=\max_k\frac{dN(N-1)}{2}=\frac{dN(N-1)}{2},
\label{constk}
\ee
which establishes the induction step. It remains to establish the basis.
Fermions and bosons begin to differ for $N=2$, for which the recursion gives
\be
P_d(2,q)=\frac{(1+q)^d\pm(1-q)^d}{2}.
\ee
The coefficient of $q^d$ in this formula is $[1\pm(-1)^d]/2=1$ for the two
cases in the proposition, which establishes the induction basis for them,
because $D(d,2)=d$.~$\Box$
\end{proof}
\begin{proposition}\label{degeven}
Let $G(d,N)$ be the lowest nonvanishing power of the fermion shape polynomial
for given $d$ and $N$. Then the degree of the boson shape polynomial in odd
dimensions and of the fermion shape polynomial in even dimensions is
$D(d,N)-G(d,N)$.
\end{proposition}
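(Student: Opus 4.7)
The plan is to encode the ``mirroring'' (odd $d$) and ``symmetry'' (even $d$) properties described after~\eqref{polN3d3} as $q\to 1/q$ functional equations, and then read off the degree from them.

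First I would formalize the mirroring in odd dimensions as the single identity
\[
B_d(N,q)=q^{D(d,N)}\,P_d(N,1/q),
\]
where the shift $D(d,N)$ is fixed by Proposition~\ref{degodd}: the fermion shape polynomial in odd dimensions has degree exactly $D(d,N)$, so pairing its leading term with the constant term of $B_d$ forces this exponent. The symmetry in even dimensions I would formalize in parallel as
\[
P_d(N,q)=q^{D(d,N)}\,P_d(N,1/q),
\]
for both the fermion and the boson polynomial, with the palindromic center $D(d,N)/2$ pinned down using the boson case: by Proposition~\ref{degodd} the boson polynomial in even dimensions has degree $D(d,N)$ and nonzero constant term, which forces the center of its palindromic symmetry to lie at $D(d,N)/2$.

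Next I would derive the degree formula from the functional equation. Whenever $Q(q)=q^{D}\,\tilde Q(1/q)$ with $\tilde Q$ a polynomial whose lowest nonzero power is $q^{G}$, the highest nonzero power of $Q$ is $q^{D-G}$, hence $\deg Q=D-G$. Apply this twice: for the odd-dimension boson case take $\tilde Q=P_d$, whose lowest power is $q^{G(d,N)}$ by definition, yielding $\deg B_d(N,q)=D(d,N)-G(d,N)$; for the even-dimension fermion case take $\tilde Q=P_d$ itself, again yielding $\deg P_d(N,q)=D(d,N)-G(d,N)$.

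The main obstacle is justifying the precise shift $D(d,N)$ in the even-dimensional fermion case, because the quoted symmetry statement only asserts palindromicity and does not name the center. The cleanest route is to verify the functional equation directly from Svrtan's recursion~\eqref{recurPd}. Substituting $q\to 1/q$ and multiplying by $q^{D(d,N)}$, I would use the auxiliary identity
\[
C_k^N(1/q)=(-1)^{k-1}q^{-\deg C_k^N}\,C_k^N(q),
\]
together with the degree-matching relation $D(d,N)-D(d,N-k)=d\,\deg C_k^N$, which reduces the manipulation to a parity-sign bookkeeping. The surviving factor $(-1)^{d(k-1)}$ disappears when $d$ is even (turning the fermion recursion into itself, and the boson recursion into itself), and reverses the sign of each summand when $d$ is odd (swapping the fermion and boson recursions). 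Induction on $N$, anchored by the base cases $P_d(0,q)=P_d(1,q)=1$, then delivers both functional equations simultaneously and completes the deduction.
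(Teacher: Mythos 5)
Your proposal is correct, and it reaches the result by the same basic strategy as the paper --- exploit mirroring in odd dimensions and palindromic symmetry in even dimensions, anchored by the facts that the boson polynomial has a nonzero constant term and that Proposition~\ref{degodd} fixes the top degree at $D(d,N)$. The difference lies in how the center of symmetry is pinned down. The paper treats mirroring and palindromicity as known properties and, for the even-dimensional fermion case, argues informally that the polynomial ``spans the powers from zero to $D(d,N)$'' because every term of the recursion~\eqref{recurPd} has degree $D(d,N)$, so that equal numbers of leading and trailing zero coefficients place the first and last nonzero powers at $G$ and $D-G$. You instead encode both properties as explicit $q\to 1/q$ functional equations and verify them directly from Svrtan's recursion; your sign bookkeeping is right ($C_k^N(1/q)=(-1)^{k-1}q^{-\deg C_k^N}C_k^N(q)$ and $D(d,N)-D(d,N-k)=d\deg C_k^N$ both check out, and the residual factor $(-1)^{d(k-1)}$ does exactly what you say). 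This buys you two things the paper does not supply: a self-contained proof of the mirroring and palindromicity statements themselves (which the paper only asserts, deferring to an earlier reference), and a rigorous identification of the palindromic center at $D(d,N)/2$ in the even-dimensional fermion case, which is precisely the step the paper handles most loosely. The cost is a longer argument; the paper's version is shorter because it leans on previously established structure.
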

\paragraph{Note} For fermions in an oscillator well, $G(d,N)$ is the
non-interacting ground-state energy.
\begin{proof}
By mirroring, the boson and fermion shape polynomials span the same range of
powers in odd dimensions. For the boson shape polynomial, the lowest power of
$q$ is always zero, because the boson ground-state wave function is a
constant. Hence its highest power (degree) must be shifted relatively to the
fermion polynomial by the same difference as the lowest power, which is
$G(d,N)$, so its degree is $D(d,N)-G(d,N)$. [E.g., $7=9-2$ in
Eq.~\eqref{polN3d3}.]

In even dimensions, the shape polynomial must be symmetric. By
Eq.~\eqref{constk}, each term in the recursion~\eqref{recurPd} has the same
degree $D(d,N)$, so one can say that the fermion polynomial always spans
the powers from zero to $D(d,N)$, but with some leading and trailing
coefficients equal to zero, because the corresponding powers of $q$ cancel in
the recursion. Given that it is symmetric, the number of leading and trailing
zeros must be the same, so if the first non-zero coefficient belongs to the
power $G(d,N)$, the last will belong to the power $D(d,N)-G(d,N)$.~$\Box$
\end{proof}

\subsection{Highest fermion shape in odd dimensions}\label{highest}
\begin{proposition}\label{highshape}
For fermions in odd dimensions, the highest-graded shape is unique and given
by the product of Vandermonde determinants across space dimensions.
\end{proposition}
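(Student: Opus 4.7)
The claim has two parts: uniqueness of the highest-graded shape and its identification with $V:=\prod_{\alpha=1}^d \prod_{i<j}(x_i^\alpha - x_j^\alpha)$, where $x_i^\alpha$ denotes the $\alpha$-th Cartesian component of $\mathbf{r}_i$. A preliminary check gives that $V$ has total degree $d\cdot N(N-1)/2 = D(d,N)$, matching the top degree from Proposition~\ref{degodd}, and that under a diagonal particle permutation $\sigma\in S_N$ each Vandermonde factor acquires a sign $\operatorname{sgn}(\sigma)$, so $V$ acquires $\operatorname{sgn}(\sigma)^d=\operatorname{sgn}(\sigma)$ for $d$ odd, making $V$ antisymmetric.

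For uniqueness I would examine the leading coefficient $a_N$ of $P_d(N,q)$ by induction from~\eqref{recurPd}. A short computation shows $C_k^N(q)$ has leading coefficient $(-1)^{k+1}$, so $[C_k^N(q)]^d$ has leading coefficient $(-1)^{d(k+1)}=(-1)^{k+1}$ when $d$ is odd; combining with the outer sign $(-1)^{k+1}$ in the recursion and using Proposition~\ref{degodd} (every term already contributes at the maximal degree $D(d,N)$) one obtains $Na_N=\sum_{k=1}^{N} a_{N-k}$. Starting from $a_0=a_1=1$, induction gives $a_N=1$, so the top-graded shape space is exactly one-dimensional.

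For the identification I would exploit that~\eqref{zdnq}, as an honest equality of Hilbert series, forces the antisymmetric polynomials $R^-$ in the $Nd$ coordinates to be a free module over the $(S_N)^d$-invariants $R^{(S_N)^d}$ with shapes as a basis. Quotienting by the augmentation ideal $I\subset R^{(S_N)^d}$, the space $R^-/IR^-$ coincides with the sign-isotypic component (under the diagonal $S_N$) of the coinvariant algebra of the Coxeter group $(S_N)^d$. A classical reflection-group theorem says that the top degree of such a coinvariant algebra is one-dimensional and spanned by the Jacobian of the fundamental invariants. For $(S_N)^d$ the invariants split across directions as $\{e_k(x_1^\alpha,\ldots,x_N^\alpha)\}$, and their Jacobian is block-diagonal, factorizing as $\prod_\alpha \det(\partial e_k^\alpha/\partial x_i^\alpha)=\pm V$. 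Thus $[V]\neq 0$ at the top of $R^-/IR^-$, and by the one-dimensionality established above $V$ represents (up to a scalar) the unique top shape.

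The main obstacle is this identification step, which leans on invariant-theoretic machinery not otherwise developed in the paper. A more self-contained alternative would be to prove directly that $V\notin IR^-$ by producing an explicit functional that separates $V$ from elements of the form $\sum_j \Phi_j\Psi_j$ with $\deg\Phi_j\geq 1$---for instance via a carefully chosen evaluation at a generic configuration, or a differential-operator pairing---but the shortest route is through the Jacobian/coinvariant-algebra picture implicit in the factorization $Z_d=Z_E^d P_d$.
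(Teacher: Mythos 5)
Your proposal is correct, but both halves take a genuinely different route from the paper. For uniqueness, the paper does not return to the recursion at all: it observes that the boson ground state is nondegenerate, so the $q^0$ coefficient of the boson shape polynomial is $1$, and then invokes mirroring (cf.\ Eq.~\eqref{polN3d3}) to transfer this to the $q^{D(d,N)}$ coefficient of the fermion polynomial. Your leading-coefficient induction through Svrtan's recursion~\eqref{recurPd} reaches the same conclusion and is arguably more self-contained, since mirroring is only asserted (``one can show'') in the paper, whereas your computation uses only Proposition~\ref{degodd} and the sign bookkeeping of $C_k^N(q)$. For the identification, the paper stays elementary: the Vandermonde product has the correct degree $D(d,N)$, is antisymmetric for odd $d$, and, being a product of linear antisymmetric factors only, admits no symmetric factor $\Phi\neq 1$; hence it is a shape, and by uniqueness it is \emph{the} top shape. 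Your coinvariant-algebra/Jacobian argument is much heavier machinery, but it actually closes a small logical gap in the paper's version: ``has no symmetric factor'' only rules out $V=\Phi\Psi$ as a single product, whereas what must really be excluded is that $V$ is a \emph{sum} $\sum_j\Phi_j\Psi_j$ with symmetric $\Phi_j$ of positive degree; the classical fact that the Jacobian spans the one-dimensional top degree of the coinvariant algebra of $(S_N)^d$ excludes precisely that. So your route buys rigor (and a structural explanation of why the top shape is a Jacobian) at the price of external invariant-theoretic input, while the paper's buys brevity and elementarity at the price of a slightly informal irreducibility step.
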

\paragraph{Note}
This is just the product of 1D ground states for each dimension.
It is antisymmetric if and only if the number of dimensions is odd.
\begin{proof}
The bosonic ground state is nondegenerate, hence the coefficient of $q^0$ in
the boson shape polynomial is unity. In odd dimensions, the  coefficient of
$q^{D(d,N)}$ in the fermion shape polynomial is also unity by mirroring, so
the corresponding shape is unique.

The Vandermonde form is a product of linear terms $t_i-t_j$ with $1\le i<j\le
N$, so its degree is just the number of terms, $N(N-1)/2$. The total degree
of a product of $d$ such forms is $dN(N-1)/2=D(d,N)$. It is antisymmetric when
$d$ is odd, so to see that it is a shape one only needs to show that it has
no symmetric factor, i.e. cannot be written as $\Phi\Psi$ with some symmetric
$\Phi\neq 1$. This is obvious, because it is a product of linear antisymmetric
terms only. Because the shape of degree $D(d,N)$ is unique, the stated product
of Vandermonde determinants is that shape.~$\Box$
\end{proof}

\subsection{Lowering the grade of a shape}

It is easy to lower the number of nodes of any wave function in the abstract
formal-power representation. One simply lowers the degree of the polynomial
representing it. The \emph{shift operators} serve this purpose:
\begin{gather}
T_k^m(\cdots t_k^n\cdots)=(\cdots t_k^{n+m}\cdots),\nonumber\\
\bar{T}_k^m(\cdots t_k^n\cdots)=\left\{
\begin{tabular}{cl}
$(\cdots t_k^{n-m}\cdots)$ & $n\ge m,$\\
$0$&$n<m$.
\end{tabular}
\right.
\end{gather}
Here parentheses denote any monomial. The shift operator corresponding to any
given variable ($t_k$ above) is denoted by capitalizing the same letter.
Shifting ``down'' is denoted by the overbar. Shift operators are linear, i.e.\
they distribute naturally over polynomials. Like derivative operators, the
downshifts do not commute with the upshifts. For example, $T\bar{T}1=0$
but $\bar{T}T1=1$.
\begin{proposition}\label{shiftdet}
A shift operator acting on any determinant in which its corresponding variable
appears in a single column acts by shifting all powers of that variable in
that column simultaneously.
\end{proposition}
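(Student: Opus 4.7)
The plan is to reduce the statement to the linearity of shift operators via the Laplace (cofactor) expansion of the determinant along the distinguished column. Suppose the variable $t_k$ appears only in column $j$ of the $N\times N$ matrix $M$. Expanding along that column gives
\be
\det M = \sum_{i=1}^{N} (-1)^{i+j}\, m_{ij}\, M_{ij},
\ee
where the minor $M_{ij}$ is obtained by deleting row $i$ and column $j$ of $M$ and therefore involves none of the entries containing $t_k$. In particular each $M_{ij}$ is fixed by any shift operator in $t_k$.

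Next I apply the shift operator $T_k^m$ (or $\bar T_k^m$) term by term. Because the shift operators are linear and distribute over sums of monomials by definition, they pass through the numerical signs and the $t_k$-free minors untouched, yielding
\be
T_k^m \det M = \sum_{i=1}^{N} (-1)^{i+j}\, \bigl(T_k^m m_{ij}\bigr)\, M_{ij}.
\ee
The right-hand side is precisely the Laplace expansion of the determinant in which column $j$ of $M$ has been replaced by its shifted entries, and that is exactly what ``shifting all powers of $t_k$ in that column simultaneously'' means.

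For the downshift $\bar T_k^m$, the only additional point is that entries whose $t_k$-exponent is less than $m$ are annihilated; but this is still a linear operation, so the same term-by-term argument applies, and the zeros simply appear in the shifted column. The main obstacle, insofar as there is one, is purely bookkeeping: one must ensure that polynomial (rather than monomial) entries in column $j$ are handled correctly. This is immediate from the definitional linearity of the shift operators over polynomials, so no deeper structural argument is required --- the proposition follows from Laplace expansion plus linearity.
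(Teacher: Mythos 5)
Your proof is correct and is exactly the paper's argument: the paper's entire proof reads ``Expand by that column,'' and you have simply written out that cofactor expansion together with the linearity of the shift operators and the observation that the minors are free of the shifted variable. No discrepancy to report.
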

\begin{proof}
Expand by that column.~$\Box$
\end{proof}
For example,
\be
\bar{V}_1\left|\begin{matrix}
t_{1}^2u_{1}^3v_{1}&
t_{2}^2u_{2}^3v_{2}\\
t_{1}u_{1}^2&
t_{2}u_{2}^2
\end{matrix}
\right|=
\left|\begin{matrix}
t_{1}^2u_{1}^3&
t_{2}^2u_{2}^3v_{2}\\
0&
t_{2}u_{2}^2
\end{matrix}
\right|=t_{1}^2u_{1}^3t_{2}u_{2}^2.
\ee
Clearly the action of a shift on a Slater determinant does not give a Slater
determinant. The power of Proposition~\ref{shiftdet} is that one can iterate
the prescription, i.e.\ apply it to the resulting determinant, nevertheless.
The idea is to use shifts to make lower-grade shapes from the highest one.
Because a simple shift does not preserve antisymmetry, we shall use
symmetrized shifts, denoted by an underline:
\be
\underline{A_iB_jC_k\cdots}=\sum_{m=0}^{N-1}
A_{i+m}B_{j+m}C_{k+m}\cdots,
\ee
where the particle indices on the right are understood modulo $N$. The index
$1$ in symmetrized shifts is understood, e.g. we write
$\underline{\bar{T}_1\bar{U}_1}$ as $\underline{\bar{T}\bar{U}}$. Note that
$\underline{A_iB_j}\neq\underline{A_{i\vphantom{j}}}\,\underline{B_j}$.

Specialize to $d=3$ now, with formal variables $t_i,u_i,v_i$, $i=1,\ldots,N$.
Then the highest-graded shape is $S\equiv\Delta_N(t)\Delta_N(u)\Delta_N(v)$, in
obvious notation, to be called the \emph{source shape} in the following.
\begin{proposition}\label{zeroshift}
Let $\Delta_N(t)$ be the Vandermonde form in the variables $t_1,\ldots,t_N$.
Then
\be
\underline{\bar{T}}\Delta_N(t)=0.
\ee
\end{proposition}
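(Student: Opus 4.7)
The plan is to use a degree/alternation argument rather than expand $\Delta_N(t)$ as a sum of monomials. The key observation is that the operator $\underline{\bar{T}}=\sum_{i=1}^N \bar{T}_i$, although built from single-variable shift operators, is symmetric in the particle labels: for any permutation $\pi$ of $\{1,\ldots,N\}$ acting on polynomials one has $\bar{T}_i\pi=\pi\bar{T}_{\pi(i)}$, so summing over $i$ yields $\underline{\bar{T}}\pi=\pi\underline{\bar{T}}$. The first concrete step I would carry out is to verify this commutation relation directly on a monomial, essentially noting that $\bar{T}_i$ acts as $[f-f|_{t_i=0}]/t_i$ and is therefore attached to a particle index in a way that is consistently relabelled by $\pi$.

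Granted this symmetry, since $\Delta_N(t)$ is alternating under permutations of $t_1,\ldots,t_N$, the image $\underline{\bar{T}}\Delta_N(t)$ is also alternating. I would then appeal to the standard fact that any nonzero alternating polynomial in $N$ variables is divisible by $\Delta_N(t)$, and so has degree at least $N(N-1)/2=\deg\Delta_N(t)$.

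To finish, I would observe that $\underline{\bar{T}}$ lowers total degree by exactly one on every monomial it does not kill, so $\deg\underline{\bar{T}}\Delta_N(t)\le N(N-1)/2-1$. Being alternating yet of degree strictly less than $N(N-1)/2$, it must vanish, giving $\underline{\bar{T}}\Delta_N(t)=0$. The only delicate point I anticipate is the commutation step: one must be careful that the symmetric group acts on the $\bar{T}_i$ by permuting their particle labels (not by any action on "slots" of a monomial), since otherwise the remainder of the argument—a short degree count combined with divisibility by the Vandermonde—goes through with no further computation.
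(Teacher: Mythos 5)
Your proof is correct, but it takes a genuinely different route from the paper's. The paper's entire proof is the one-liner ``this is a cyclic sum of alternating terms'': writing $\Delta_N(t)=\sum_\sigma\mathrm{sgn}(\sigma)\prod_j t_j^{\sigma(j)-1}$, the surviving terms of $\sum_i\bar{T}_i$ cancel in pairs, the contribution of $(i,\sigma)$ against that of $(i',\sigma\circ(i\,i'))$ where $\sigma(i')=\sigma(i)-1$; no degree theory or divisibility enters. You instead observe that $\underline{\bar{T}}=\sum_i\bar{T}_i$ commutes with the $S_N$ action (your identification $\bar{T}_if=(f-f|_{t_i=0})/t_i$ makes this clean), so the image of an alternating polynomial is alternating; then divisibility of nonzero alternating polynomials by $\Delta_N(t)$ forces degree at least $N(N-1)/2$, while $\underline{\bar{T}}$ strictly lowers degree. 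Your version is less elementary but buys more: the identical argument shows that any symmetrized word of shifts with positive net downshift in a single variable set annihilates $\Delta_N(t)$ (e.g.\ $\underline{\bar{T}^m}\Delta_N(t)=0$ for all $m\ge1$), and it ties the vanishing to the degree bookkeeping of Propositions~\ref{degodd} and~\ref{degeven}. The convention in $\bar{T}_i\pi=\pi\bar{T}_{\pi(i)}$ versus $\pi\bar{T}_{\pi^{-1}(i)}$ is immaterial since you only use invariance of the full sum over $i$.
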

\begin{proof}
This is a cyclic sum of alternating terms.~$\Box$
\end{proof}
Proposition~\ref{zeroshift} is good news, because one can show~\cite{Sunko16-1}
that there are no fermion shapes of next-to-highest grade in odd dimensions,
cf.\ Eq.~\eqref{polN3d3}. In other words, it appears that the downshifts cannot
leave the space of shapes, if applied iteratively to the source shape. This
idea is at the core of the efficient algorithm to generate shapes.

\subsection{Description of the algorithm}

\begin{figure*}
\begin{center}
\includegraphics[width=0.75\textwidth]{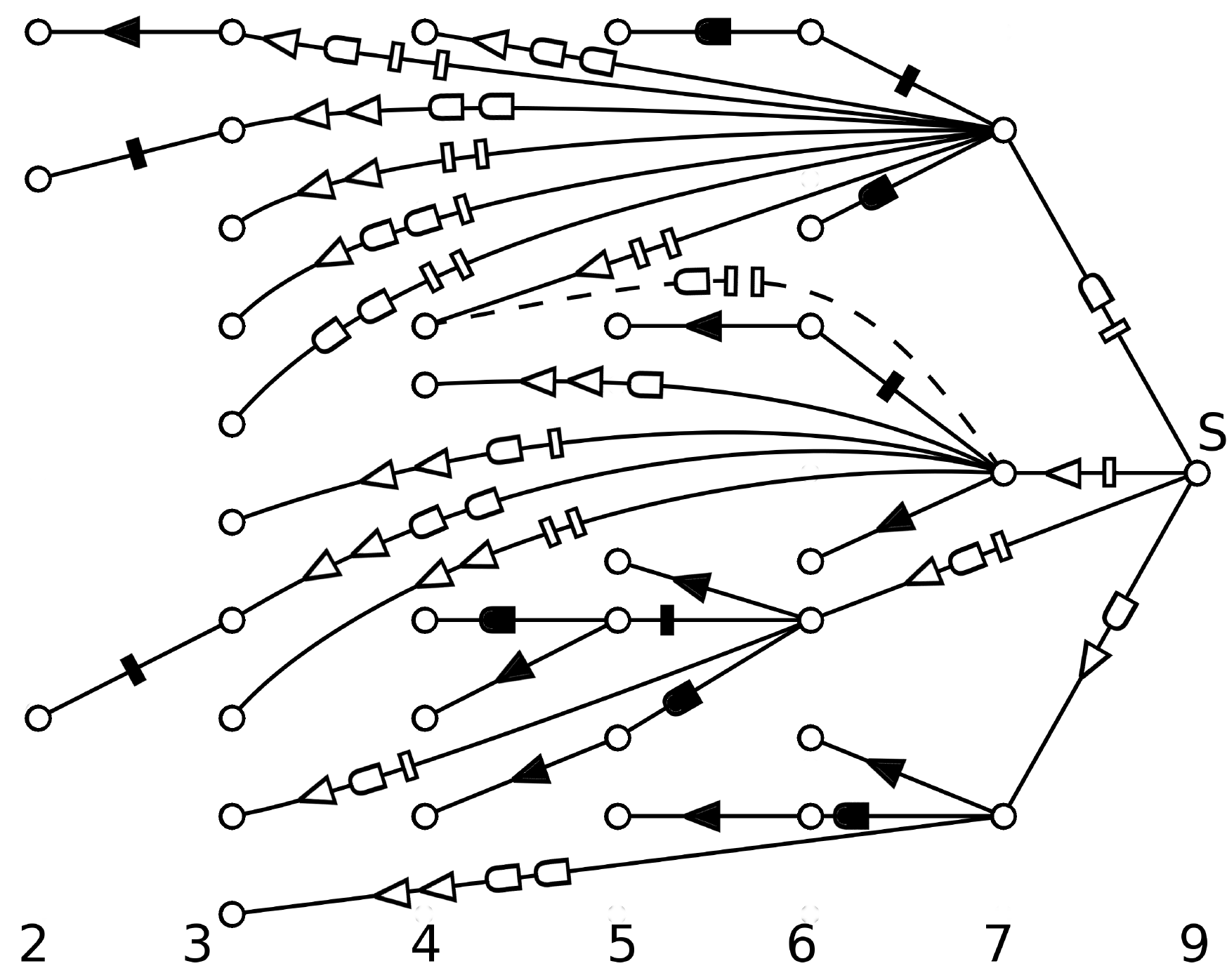}
\end{center}
\caption{Branching tree of shapes for $3$ fermions in $3$ dimensions. The $36$
shapes are nodes, arranged in columns, with grade given below each column. The
edges are decorated with operators whose action traverses the edge from right
to left. The operators $\bar{T}$, $\bar{U}$ and $\bar{V}$ are depicted by
graphically similar open symbols. The closed symbols analogously represent the
operators $\underline{T\bar{T}}^2$, $\underline{U\bar{U}}^2$ and
$\underline{V\bar{V}}^2$. All operators on a given edge are symmetrized
together. For example, the shape at upper left is given by
$\underline{V\bar{V}}^2\,\underline{\bar{V}\bar{U}\bar{T}}^2\,
\underline{\bar{U}\bar{T}}\,S$, where $S$ is the source shape, depicted at far
right. The dashed line is explained in the text.}
\label{graph}
\end{figure*}
The algorithm will now be described on the specific example of $3!^2=36$
shapes of $3$ particles in $3$ dimensions. In Fig.~\ref{graph}, the shapes are
depicted as nodes in a graph, with the source shape
$S=\Delta_3(t)\Delta_3(u)\Delta_3(v)$ on the right. Transformations of one
shape into another are depicted by edges of the graph. These are effected by
lowering operators, as denoted by edge decorations in the figure, giving the
edges a natural orientation from right to left, also depicted by the
orientation of the arrow-like symbols. Notably, a simple lowering operator
like $\underline{\bar{T}}$ \emph{always} gives zero when acting on any shape
(cf.\ Proposition \ref{zeroshift}), so the elementary operators which lower
the grade by one are $\underline{T\bar{T}}^2$ and similar, depicted by filled
symbols. All shapes are generated from the source by lowering operators. One
can imagine the operator symbols on the edges as filters, or funnels, which
take the source flow from right to left, letting through shapes of ever lower
grade.

\subsection{Fermion sign problem}

The three directions in space are equivalent, and so are the shift operators
corresponding to them. Permutations of the shift operators in Fig.~\ref{graph}
give rise to different but equivalent graphs. The edges depicted by full
lines form a particular kind of oriented spanning tree, where every node
except the source (root of the tree) has exactly one incoming edge, while the
root has none. Every choice of such a \emph{branching tree} obviously fixes
the phases of all shapes uniquely. It may be possible to add edges
consistently with this sign choice, but this cannot be guaranteed in general.
A conflicting insertion is depicted by the dashed line: in equations, it turns
out that
\be
\underline{\bar{V}\bar{T}}^2\,\underline{\bar{U}\bar{T}}\,S=
-\underline{\bar{U}\bar{T}}^2\,\underline{\bar{V}\bar{T}}\,S.
\label{sign}
\ee
This observation means that one cannot simply ``turn loose'' all possible
operators on the source state to generate all possible shapes, because one
will encounter the fermion sign problem\cite{Loh90}. In other words, a
context-free, or local, definition of shape signs is not possible, because any
algorithm changing the states will in principle allow some local moves which
spoil the agreed-upon signs. Instead, the correct algorithmic definition of
shape signs is a choice of branching tree rooted at the source, which is a
\emph{global} object.

In principle, simulations can deal with the above sign issue in one of three
ways. The first is to generate all shapes beforehand and use them as a basis,
while varying only the coefficients $\Phi_i$ in the simulation. This approach
naturally leads to representing physical states as ``vectors of symmetric
polynomials,''
\be
\left(\Phi_1,\Phi_2,\ldots,\Phi_{N!^{d-1}}\right).
\label{fmod}
\ee
Such a structure is called a \emph{free module} (as distinct from a vector
space, where the $\Phi_i$ would be just numbers). It evidently solves the
sign problem, because states are mapped to a space of symmetric functions. The
practicality of this proposal remains to be demonstrated.

Another possible approach is to compile a list of allowed operators, which are
consistent with a given branching tree. These operators could then be used in
a context-free manner, enabling one to generate shapes ``on the fly'' without
storing them explicitly. It is an open question at present whether such a set
of mutually consistent operators can always be found, which is also complete
in the sense that they generate all the shapes.

Finally, one can try to find rules of calculation with the operators involved.
In this approach, the individual shift operators are letters, while the
symmetrized operators --- underlined strings of one or more letters --- are
words. The task is to find the grammar of this language, a sort of extended
Wick's theorem. From this point of view, Eq.~\eqref{sign} looks as if the
letters $\bar{T}$, $\bar{U}$ and $\bar{V}$ were anticommuting. The previous
question of finding a complete consistent set of operators may now be
rephrased: can one compile a list of words such that using them does not
require a grammar? The formal-language approach is potentially the most
powerful way to manipulate many-body states, but also requires the most future
research.

\section{Discussion}

In the present work an efficient algorithm has been described, which generates
all shapes of $N$ particles in odd dimensions. Much about the algorithm and
especially the branching-tree structures it naturally engenders remains to be
clarified. The discussion here places it in the broader context of efforts to
represent fermion systems efficiently, concentrating on the open questions.

Most pragmatically, one can regard the algorithm as just another way to obtain
shapes, more practical than the other known~\cite{Sunko16-1} one, but in any
case a means to an end. With the shapes in hand, the really interesting
insight is to represent physical states as a free module~\eqref{fmod}, rather
than a vector space. This is in some sense the furthest one can take
Heisenberg's matrix mechanics. It explains immediately why fermion systems
cannot be directly bosonised in more than one dimension~\cite{Tomonaga50}.
Namely, in one dimension there is only one shape, the ground-state Slater
determinant $\Psi_0$, so that any state can be written as $\Psi=\Phi\Psi_0$.
Because $\Phi$ is a symmetric function, bosonisation succeeds: every excited
state $\Psi$ is uniquely mapped on some boson wave function $\Phi$. In the
standard second-quantized formalism, this result reads, say,
\be
\left|\Psi\right>=B_1^\dagger B_2^\dagger\left|\Psi_0\right>
\leftrightarrow\Phi_1\Phi_2\Psi_0=\Phi\Psi_0=\Psi,
\ee
for a given product of boson excitations. Because the free module~\eqref{fmod}
is one-dimensional in one dimension, the structure of excitations is purely
multiplicative. Generally, however, the free module has dimension $N!^{d-1}$,
so that excitations can be, for example,
\be
B_1^\dagger\left|\Psi_1\right>+
B_2^\dagger\left|\Psi_2\right>
\leftrightarrow\Phi_1\Psi_1+\Phi_2\Psi_2,
\ee
with \emph{same} $B_i$'s (symmetric polynomials $\Phi_i$, or bosons) but
\emph{different} $\Psi_i$'s (shapes, or vacua). One can say either that
bosonisation fails, because the structure of excitations is no longer
multiplicative, or that it finally succeeds, because one has found the correct
generalization of the one-dimensional case. In any case, the ``deep''
structure of fermionic excitations exposed here is that the vacua are like
prime numbers, in the sense that they do not factorize: one cannot be obtained
from another by multiplication. Therefore excitations must be described by a
combination of multiplication \emph{and addition}. As of this writing, it is
of greatest interest to learn to calculate efficiently in the free module,
because mapping fermionic states onto symmetric functions \emph{a priori}
solves the fermion sign problem.

The algorithm has an interesting feature from the theoretical point of view.
All its moves \emph{reduce} information, because they are net downshifts,
which correspond to lowering monomial powers, reducing the overall degree of
the polynomials involved. In order to go in the opposite direction, raising
the degree, one would have to use quite ``clever'' combinations of upshifts in
order to stay within the space of shapes, i.e.\ avoid states of the general
form~\eqref{basic} with some $\Phi_i\neq 1$. In other words, upshifting
requires \emph{adding} information in order to make higher-grade shapes from
lower-grade ones. It is like integration, while downshifting is like taking
derivatives: one requires insight, while the other is an automatic operation.
One must conclude that the source shape has the maximum information content,
so that the ``flow'' passing through ``filters'' in Fig.~\ref{graph} is the
flow of information, or negentropy.

This conclusion runs quite counter to thermodynamic intuition, which takes for
granted that states with high excitation energy have high entropy as well. The
critical issue in this reasoning is the relationship between the number of
nodes and the energy of the state. If the state is dominated by kinetic
energy, one is in the weak-coupling limit, and the usual thermodynamic
reasoning prevails. However, if it is dominated by correlations, the system
may choose a ``complicated'' ground state, with more nodes, but unique in some
sense, hence of low entropy. This situation is called strongly correlated, the
most famous example being Hund's rule~\cite{Yamanaka05}.

The shape paradigm provides an interesting way to think about the strongly
correlated limit. It is as if the system stays cold by using extra nodes
to store information, in the form of some rare complicated states, instead of
assigning nodes to kinetic motion, which would distribute them among a large
number of common simple states, with high entropy. In particular, the source
state is unique among a very large number of states with the same number of
nodes. In our example of three particles in three dimensions, there are $3838$
states with nine nodes, only one of which is the source. In fact, mirroring
indicates there must be a way to think of the source as a zero-entropy state,
equivalent to the completely featureless boson ground state. Its concrete
realization as a product of three one-dimensional fermionic ground states
indicates the same.

A simple way to reconcile the above discussion with standard thermodynamics is
to assign to each shape an entropy given by the logarithm of the coefficient
of the shape polynomial, corresponding to its grade. This resolution has the
pleasant property of specializing to the usual definition of entropy of the
non-interacting ground state, which is just the logarithm of its degeneracy
($\ln 3$ in Fig.~\ref{graph}). The source shape always has zero entropy, just
as the reasoning above indicated it should. In this way one can think of
shapes as low-entropy states embedded in a much larger space of high-entropy
ones. The latter are described by bosonic excitations of the shapes, as given
by Eq.~\eqref{basic} with some $\Phi_i\neq 1$. In other words, the proper
physical resolution of the above conundrum is that the shapes are a choice of
possible vacua for a physical system, and these vacua are special in the sense
that they have an exceptionally low entropy, or degeneracy, for their given
energy. Once the ground state is selected, perhaps as a superposition of the
vacua, the remaining shapes may still make their presence felt as bandheads of
higher-energy excitation bands, such as are ubiquitous in the spectra of
finite systems. In this way their ``exceptionalism'' persists, giving them a
special role in the excitation spectrum, even if some other state is the
ground state~\cite{Sunko16-2}.

\section{Conclusion}

The shape paradigm has promise as both a theoretical and practical tool for
the description of strongly correlated finite systems, particularly of
fermions. While much remains to be done, the algorithm described in the
present work removes a major roadblock in the practical application of the
paradigm to transition-metal compounds, whose open $3d$ orbital requires that
one should be able to manipulate states of around $N=5$ identical fermions.
These materials are in the focus of current fundamental and applied interest,
as both cuprate and pnictide high-temperature superconductors belong to this
category. It is possible to separate the local (strongly correlated) part of
the problem from the extended one~\cite{Hyowon14}, making shapes an
interesting contender for the description of the former. It is still too early
for a direct comparison of the shape paradigm with other more mature
approaches, or with experiment. Hopefully the readers will be motivated to
join the exploration of shapes based on their own interests.

\begin{acknowledgements}
I thank D.~Svrtan for his help and interest.
\end{acknowledgements}

% BibTeX users please use one of
%\bibliographystyle{spbasic}      % basic style, author-year citations
%\bibliographystyle{spmpsci}      % mathematics and physical sciences
%\bibliographystyle{spphys}       % APS-like style for physics
%\bibliography{../../htc.bib}   % name your BibTeX data base

\begin{thebibliography}{10}
\providecommand{\url}[1]{{#1}}
\providecommand{\urlprefix}{URL }
\expandafter\ifx\csname urlstyle\endcsname\relax
  \providecommand{\doi}[1]{DOI \discretionary{}{}{}#1}\else
  \providecommand{\doi}{DOI \discretionary{}{}{}\begingroup
  \urlstyle{rm}\Url}\fi

\bibitem{Lazic15}
P.~{Lazi\'c}, D.K. Sunko, EPL (Europhysics Letters) \textbf{112}(3), 37011
  (2015).
\newblock \doi{10.1209/0295-5075/112/37011}

\bibitem{Borisenko16}
S.V. Borisenko, D.V. Evtushinsky, Z.H. Liu, I.~Morozov, R.~Kappenberger,
  S.~Wurmehl, B.~Buchner, A.N. Yaresko, T.K. Kim, M.~Hoesch, T.~Wolf, N.D.
  Zhigadlo, Nat Phys \textbf{12}(4), 311 (2016).
\newblock \doi{10.1038/nphys3594}.
\newblock Letter

\bibitem{Hyowon14}
H.~Park, A.J. Millis, C.A. Marianetti, Phys. Rev. B \textbf{90}, 235103 (2014).
\newblock \doi{10.1103/PhysRevB.90.235103}

\bibitem{Niksic13}
G.~{Nik\v si\'c}, I.~{Kup\v ci\'c}, D.~Sunko, S.~{Bari\v si\'c}, Journal of
  Superconductivity and Novel Magnetism \textbf{26}(8), 2669 (2013).
\newblock \doi{10.1007/s10948-013-2157-9}

\bibitem{OSBarisic12}
O.S. Bari\v{s}i\'{c}, S.~Bari\v{s}i\'{c}, J. Supercond. Nov. Magn. \textbf{25},
  669 (2012).
\newblock \doi{10.1007/s10948-012-1461-0}

\bibitem{Bianconi87a}
A.~Bianconi, A.C. Castellano, M.~De~Santis, C.~Politis, A.~Marcelli,
  S.~Mobilio, A.~Savoia, Zeitschrift f{\"u}r Physik B Condensed Matter
  \textbf{67}(3), 307 (1987).
\newblock \doi{10.1007/BF01307254}

\bibitem{Tahir-Kheli11}
J.~Tahir-Kheli, W.A. Goddard, The Journal of Physical Chemistry Letters
  \textbf{2}(18), 2326 (2011).
\newblock \doi{10.1021/jz200916t}

\bibitem{Campi14}
G.~Campi, A.~Ricci, N.~Poccia, A.~Bianconi, Journal of Superconductivity and
  Novel Magnetism \textbf{27}(4), 987 (2014).
\newblock \doi{10.1007/s10948-013-2434-7}

\bibitem{GBianconi12}
G.~Bianconi, Phys. Rev. E \textbf{85}, 061113 (2012).
\newblock \doi{10.1103/PhysRevE.85.061113}

\bibitem{Tahir-Kheli13}
J.~Tahir-Kheli, New Journal of Physics \textbf{15}(7), 073020 (2013).
\newblock \doi{10.1088/1367-2630/15/7/073020}

\bibitem{Mirzaei12}
S.I. Mirzaei, D.~Stricker, J.N. Hancock, C.~Berthod, A.~Georges, E.~van Heumen,
  M.K. Chan, X.~Zhao, Y.~Li, M.~Greven, N.~Bari\v{s}i\'{c}, D.~van~der Marel,
  PNAS \textbf{110}(15), 5774 (2013).
\newblock \doi{10.1073/pnas.1218846110}

\bibitem{Chan14a}
M.K. Chan, M.J. Veit, C.J. Dorow, Y.~Ge, Y.~Li, W.~Tabis, Y.~Tang, X.~Zhao,
  N.~Bari\ifmmode \check{s}\else \v{s}\fi{}i\ifmmode~\acute{c}\else \'{c}\fi{},
  M.~Greven, Phys. Rev. Lett. \textbf{113}, 177005 (2014).
\newblock \doi{10.1103/PhysRevLett.113.177005}

\bibitem{Drozdov15}
A.P. Drozdov, M.I. Eremets, I.A. Troyan, V.~Ksenofontov, S.I. Shylin, Nature
  \textbf{525}(7567), 73 (2015).
\newblock \doi{10.1038/nature14964}.
\newblock Letter

\bibitem{Einaga16}
M.~Einaga, M.~Sakata, T.~Ishikawa, K.~Shimizu, M.I. Eremets, A.P. Drozdov, I.A.
  Troyan, N.~Hirao, Y.~Ohishi, Nat Phys \textbf{advance online publication}
  (2016).
\newblock \doi{10.1038/nphys3760}.
\newblock Letter

\bibitem{Bussmann-Holder16}
A.~Bussmann-Holder, J.~K{\"o}hler, M.H. Whangbo, A.~Bianconi, A.~Simon, Novel
  Superconducting Materials \textbf{2}, 37 (2016).
\newblock \doi{10.1515/nsm-2016-0004}

\bibitem{Sunko16-1}
D.K. Sunko, Phys. Rev. A \textbf{93}, 062109 (2016).
\newblock \doi{10.1103/PhysRevA.93.062109}

\bibitem{Sunko16-2}
D.K. Sunko.
\newblock Fundamental invariants of many-body {Hilbert} space (2016).
\newblock Review article, submitted to Modern Physics Letters B.

\bibitem{Derksen02}
H.~Derksen, G.~Kemper, \emph{Computational Invariant Theory} (Springer Berlin
  Heidelberg, 2002)

\bibitem{Loh90}
E.Y. Loh, J.E. Gubernatis, R.T. Scalettar, S.R. White, D.J. Scalapino, R.L.
  Sugar, Phys. Rev. B \textbf{41}, 9301 (1990).
\newblock \doi{10.1103/PhysRevB.41.9301}

\bibitem{Tomonaga50}
S.~Tomonaga, Progress of Theoretical Physics \textbf{5}(4), 544 (1950).
\newblock \doi{10.1143/ptp/5.4.544}

\bibitem{Yamanaka05}
S.~Yamanaka, K.~Koizumi, Y.~Kitagawa, T.~Kawakami, M.~Okumura, K.~Yamaguchi,
  International Journal of Quantum Chemistry \textbf{105}(6), 687 (2005).
\newblock \doi{10.1002/qua.20784}

\end{thebibliography}

\end{document}